\tikzstyle{block}=[draw opacity=0.7,line width=1.4cm]
\definecolor{CranJ}{cmyk}{0,0.69,0.54,0.04} 
\definecolor{PinkJ}{cmyk}{0,0.71,0.43,0.12} 
\definecolor{Cran}{cmyk}{0,0.73,0.41,0.29} 
\definecolor{VRed}{cmyk}{0,0.75,0.25,0.2} 
\definecolor{ORed}{cmyk}{0,0.75,0.75,0} 
\definecolor{CBlue}{cmyk}{1,0.25,0,0} 
\newlength\myindent
\tikzset{cloud/.pic={
\node[cloud, cloud puffs=10.8,cloud puff arc=110, aspect=2, draw, text width=3cm
    ] () at (0,0) {\tikzpictext};
}}
\title{\Large \bf An IMM-based Decentralized Cooperative Localization with \\LoS and NLoS UWB Inter-agent Ranging}
\author{Jianan Zhu \quad Solmaz S. Kia, \emph{senior member, IEEE }\\
\normalsize{\emph{University of California Irvine}}
  \thanks{The authors
    are with the Mechanical and Aerospace Eng. Dept. of 
    Univ. of California Irvine, CA 92697,~USA, {\tt\small jiananz1,solmaz@uci.edu}. This work is supported by the U.S. Dept. of Commerce, National Institute of Standards and Technology award 70NANB17H192.
    }%
}
\newcommand{\real}{{\mathbb{R}}}
\newcommand{\argmin}{\operatorname{argmin}}
\newcommand{\prpg}{\mbox{{\footnotesize\textbf{--}}}}
\newcommand{\updt}{\mbox{\textbf{+}}}
\newcommand{\vect}[1]{\boldsymbol{\mathbf{#1}}}
\newcommand{\Bvect}[1]{\bm\bar{\boldsymbol{\mathbf{#1}}}}
\newcommand{\Tvect}[1]{\bm\tilde{\boldsymbol{\mathbf{#1}}}}
\newcommand{\Hvect}[1]{\bm\hat{\boldsymbol{\mathbf{#1}}}}
 \newcommand{\boxend}{\hfill \ensuremath{\Box}}
\newcommand{\oprocendsymbol}{\hbox{$\bullet$}}
\newcommand{\oprocend}{\relax\ifmmode\else\unskip\hfill\fi\oprocendsymbol}
\newtheorem{rem}{Remark}[section]
\newtheorem{lem}{Lemma}[section]
\newenvironment{proof}{\qquad \textit{Proof:}}{\hfill$\square$}
\renewcommand*{\@opargbegintheorem}[3]{\trivlist
      \item[\hskip \labelsep{\emph{ #1\ #2}}] \emph{(#3):}\ \itshape}
\definecolor{mycolor}{rgb}{0.122, 0.435, 0.698}
\newcommand{\mybox}[1]{%
  \setbox0=\hbox{#1}%
  \setlength{\@tempdima}{\dimexpr\wd0+13pt}%
  \begin{tcolorbox}[colframe=mycolor,boxrule=0.5pt,arc=4pt,
      left=6pt,right=6pt,top=6pt,bottom=6pt,boxsep=0pt,width=\@tempdima]
    #1
  \end{tcolorbox}
}
\begin{document}
\maketitle
\begin{abstract}
This paper investigates an infra-structure free  global localization of a group of communicating mobile agents (e.g., first responders or exploring robots) via an ultra-wideband (UWB) inter-agent ranging aided dead-reckoning.
We propose a loosely coupled cooperative localization  algorithm that acts as an augmentation atop the local dead-reckoning system of each mobile agent. This augmentation becomes active only when an agent wants to process a relative measurement it has taken.
The main contribution of this paper is addressing the challenges in the proper processing of the UWB range measurements in the framework of a loosely coupled cooperative localization. 
Even though UWB offers a decimeter level accuracy in line-of-sight (LoS) ranging, its accuracy degrades significantly in non-line-of-sight (NLoS) due to the significant unknown positive bias in the measurements. Thus, the measurement models for the UWB LoS and NLoS ranging conditions are different, and proper processing of NLoS measurements requires a bias compensation measure. We also show that, in practice, the measurement modal discriminators determine the type of UWB range measurements should be probabilistic. To take into account the probabilistic nature of the NLoS identifiers when processing UWB inter-agent ranging feedback, 
we employ an interacting multiple model (IMM) estimator in our localization filter. We also propose a bias compensation method for NLoS UWB measurements. The effectiveness of our cooperative localization is demonstrated via an experiment for a group of pedestrians who use UWB relative range measurements among themselves to improve their shoe-mounted INS geolocation.
\end{abstract}
 
\section{Introduction}
This paper investigates a practical infra-structure free solution for mobile asset (e.g., first responders or exploring robots) geo-localization in harsh indoor environments via cooperative localization using  ultra-wideband (UWB) inter-agent ranging aided dead-reckoning. In indoor localization, Global Positioning System (GPS) fails to provide accurate localization information due to obstructed line of sight to satellites and weak signal strength.
 Localization based on inertial navigation system (INS)~\cite{DT-JLW:04} or odometry~\cite{thrun2005probabilistic} provide a self-contained solution but suffer from unbounded error accumulation of inherent measurement noises overtime. Aiding by detecting and processing measurements from external landmarks helps to bound the localization error~\cite{JL-HFD:91,MWMGD-PN-SC-HFDW-MC:01} but has limited usage when external landmarks are not widely available. For a group of communicating mobile agents, aiding via cooperative localization (CL) by processing inter-agent measurements as feedback to update the location estimate makes the localization system more reliable under the circumstances when external landmarks are sparse~\cite{SSK-SF-SM:16}.
In CL, sporadic access to absolute external aiding signals like GPS by a particular member can result in a net benefit for the rest of the team when others take relative measurement from that particular agent. However, the effectiveness of CL depends 
on the accurate modeling and processing of the inter-agent measurements. On the other hand, in CL, the inter-agent measurement updates create strong correlations between agents. Ignoring the correlations will lead to over-confident estimations and even filter divergence. Keeping track of the correlation explicitly requires persistent inter-agent communication, therefore comes with high communication overhead and stringent connectivity requirement~\cite{SSK-SF-SM:16}.  Therefore, the effectiveness of CL also depends on devising consistent decentralized implantation that accounts for inter-agents correlations with reasonable computation and communication cost per agent. From the communication cost perspective, loosely coupled CL algorithms~\cite{POA-CR-RKM:01,HL-FN:13,DM-NO-VC:13,JZ-SSK-TRO:19},
 which account for unknown inter-agent correlations by implicit approaches that are closely related to the covariance intersection method in sensor fusion literature~\cite{SJJ-JKU:97},
 offer the most efficient solution. These algorithms do not require any network-wide connectivity; only the two agents involved in a relative measurement should exchange information to process that relative measurement. In this paper, we adopt the \emph{Discorrelated minimum variance} (DMV) approach of~\cite{JZ-SSK-TRO:19} as our CL~framework.

A variety of sensing technologies including computer vision-based techniques~\cite{JK-HJ:08} and wireless radio signal based techniques~\cite{RX-SVW-CR-NW:17} are used for inter-agent relative measurements in CL implementations. The computer vision-based techniques' requirement of LoS condition between the agents and proper lighting make them less effective in complex and cluttered environments. For such environments, wireless signal based inter-agent rangings offer a more efficient solution. Among the wireless ranging technologies, UWB  due to its high time resolution, wide bandwidth, and capability to work under NLoS condition~\cite{LY-GBG:04} has attracted significant attention for applications in dense multi-path environments, especially indoor environments. UWB uses a time-of-flight approach for ranging and offers a decimeter level ranging in LoS conditions~\cite{FZ-AG-KKL:19}. However, NLoS UWB ranging measurements are positively biased (see Fig.~\ref{fig:NLoS_bias}) and thus have lower accuracy~\cite{BD-JK-ND:03}, which can have a significant impact on localization performance. Thus, the measurement models for the UWB LoS and NLoS ranging conditions are different, and proper processing of NLoS measurements requires a bias compensation measure.

 \begin{figure}[t]
    \centering
    \includegraphics[width=0.15\textwidth]{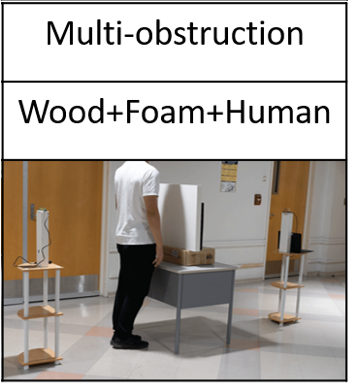}\includegraphics[width=0.37\textwidth]{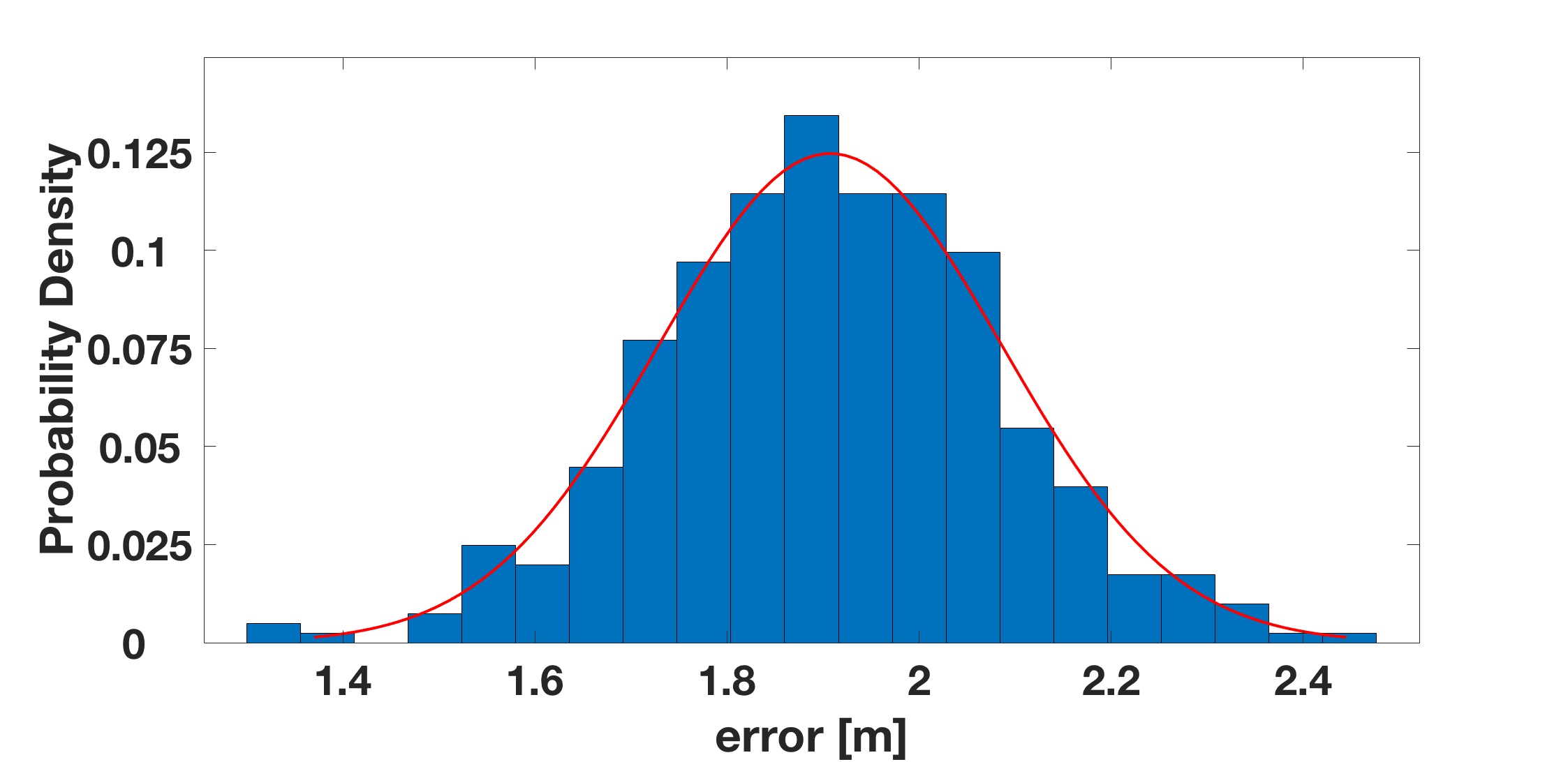}
    \caption{{\small Experimental study with multiple obstruction: the plot on the right shows the error probability distribution when the actual distance is $10$ meters; see~\cite{JZ-SSK:19sensor} for more case studies.
    }}
    \label{fig:NLoS_bias}
\end{figure}
To mitigate the adverse effect of the NLoS ranging bias on the localization accuracy, one idea is to identify the NLoS measurements and drop them~\cite{ZZ-SL-LW:18,NR-PL-NP-SC-BS-AR:18,MX-CX-DY-LW:18}. But, this approach limits the effectiveness of the UWB measurement feedbacks in dense and complex environments. To avoid discarding the NLoS measurements, empirical analysis and machine learning methods that aim to identify and remove the bias are proposed ~\cite{BA-KP:06,SW-YM-QZ-NZ:07,SM-WMG-HW-MZW:10,QZ-DZ-SZ-TZ-DM:15,KW-KY-YL:17}. However, these approaches require collecting a large amount of training data. The machine learning techniques also come with high computational complexity to analyze the signal channel statistics. As such, these methods are not a practical solution for real-time online applications in unknown environments. When UWB is used as an aiding for a dead-reckoning system,~\cite{JZ-SSK-ITM:19,JZ-SSK:19sensor,JZ-SSK:20PLANs} use the algorithmic bias compensation methods from estimation filter literature~\cite{YBS-PKW-XT:11} to deal with UWB NLoS bias. \cite{JZ-SSK-ITM:19} uses the covariance inflation method followed by a constrained Kalman filtering to compensate for bias in UWB range measurements in a cooperative localization algorithm. However, the covariance inflation method is known to be conservative and can lead to filter inconsistency~\cite{YBS-PKW-XT:11}. On the other hand,~\cite{JZ-SSK:19sensor} and \cite{JZ-SSK:20PLANs} use the Schmidt Kalman filtering (SKF)~\cite{RYN-SMH-SMG-ABP:05,YBS-PKW-XT:11}, which is known to yield a more efficient bias compensation, followed by a novel constrained sigma point based filtering to process NLoS measurements with respect to beacons with known locations to aid an INS localization. In this paper, we adopt the SKF bias compensation approach for NLoS UWB bias compensation and incorporate it into the framework of the DMV CL.

Localization filters that process both LoS and NLoS UWB ranging such as those in~\cite{JZ-SSK-ITM:19} and \cite{JZ-SSK:19sensor} assume that the LoS and NLoS measurements can be identified and distinguished from each other with exact certainty. These localization filters use the popular power-based NLoS identification method~\cite{KG-AKR-YS-CLL-GC:17}. The working principle of the power-based NLoS identification methods is that in LoS condition power of the received direct-path signal takes a big proportion of the total received signal power, while in NLoS condition the direct-path is significantly attenuated or even completely blocked. When the difference between total received power and the direct-path power is larger than a threshold value, the range measurement is identified to be NLoS~\cite{KG-AKR-YS-CLL-GC:17}.  The performance of this approach however depends highly on the choice of the discrimination threshold value. Moreover, as we demonstrate via an experimental study in our preliminary work~\cite{JZ-SSK:20PLANs}, in practice deterministic identification of the UWB ranging mode is not accurate, and identification that determines the type of UWB range measurements deliver their results with only some level of certainty, see Fig.~\ref{fig:exp_identification}.  Given  the  probabilistic  nature  of  the  power-based  LoS/NLoS identification method,  processing inter-agent UWB  range  measurements should be  modeled as  a \emph{dynamic multiple model problem}. Optimal estimation of a dynamic multiple model problem requires a set of parallel filters whose number increases exponentially with time~\cite{YBS-XRL:95,XL-YB:96}. To design a practical localization algorithm with a reasonable computation cost, we adopt the suboptimal IMM estimator framework~\cite{YB-PKW-XT:11}.

To summarize, the main contribution of this paper is to propose a proper framework to process multi-modal UWB range measurements, which are multi-modal due to the possibility of the measurements being in LoS or NLoS. 
The innovation in our work is to take into account the probabilistic nature of the NLoS identifiers and also propose a bias compensation method for NLoS measurements for an UWB-based cooperative localization in complex environments. The effectiveness of our proposed method is demonstrated via an experiment for a group of pedestrians who use UWB relative range measurements among themselves to improve their shoe-mounted INS geolocation. We note that UWB enables also inter-agent communication for cooperative localization~\cite{JZ-SSK:20sensor}. Therefore, our solution provides an infra-structure free localization to track assets in challenging indoor venues where the environment is not fixed to offer features for SLAM~\cite{MWMGD-PN-SC-HFDW-MC:01} (e.g., indoor firegrounds), the lighting is poor (e.g. underground cave) or the features are not revisited (e.g. in rapidly evolving fire scenes).

 \begin{figure}[t]
    \centering
    \includegraphics[scale=0.26
    ]{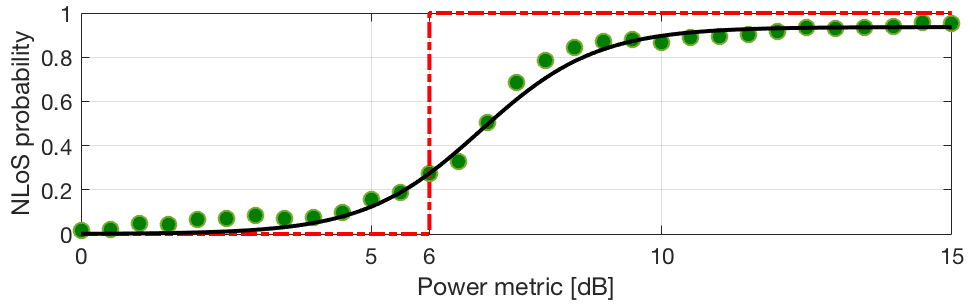}
    \caption{{\small The experimental result that demonstrates the probabilistic nature of the power-based NLoS identification: the green points are the empirical probability of a signal being in NLoS with the given power metric, and the black curve is the fitted sigmoid probability function $p=\frac{1}{1.068+1.013\textup{e}^{(-PM+6.934)}}$, where $PM$ is the power metric. The dashed vertical line shows the conventional deterministic threshold that if used, to its left corresponds to identifying the signal as LoS and to its right as NLoS with absolute certainty.   
    }}
    \label{fig:exp_identification}
\end{figure}

\section{Problem definition}\label{sec::prob_def}
Consider a team of $N$ mobile agents with computation and communication capabilities in which each agent is equipped with a set of proprioceptive sensors, e.g., INS or wheel encoders, to localize itself in a global frame using dead-reckoning. At each time step $t\in\mathbb{Z}_{\geq0}$ let this local state estimate and the corresponding error covariance constitute the local belief, $\text{bel}^{i\prpg}(t)=(\Hvect{x}^{i\prpg}(t),\vect{P}^{i\prpg}(t))$, of agent $i$ about its pose in the global frame. To bound the localization error of the dead-reckoning system, suppose each agent is also equipped with an UWB transceiver to take and process relative range measurements from other team members as well as possibly a few UWB beacons with known locations in the environment. We assume that agents are able to uniquely identify the other UWB nodes (hereafter, UWB nodes refers to both a mobile agent or a beacon) via the unique MAC address of their UWB transceivers. The measurement model for the UWB ranging between any agent $i$ and another node $j$ is 
\begin{align}\label{eq::measur_ij_bias}
  {z}^i_{j}(t)&=\underbrace{\|\vect{p}^i(t)-\vect{p}^j(t)\|}_{h(\vect{x}^i(t),\vect{x}^j(t))}+\,{b}^i(t)+{\nu}^i(t)+\begin{cases} 
      0, & \text{LoS} \\
      {b}^i(t), & \text{NLoS}
\end{cases},
\end{align}
where ${b}^i(t)$ is the additive bias modeled as Gaussian noise with mean $\bar{\phi}^i$ and variance $E[b^i(t)\,b^i(t)]={\Phi}^i>0$, while ${\nu}^i(t)$ is the additive zero-mean white Gaussian measurement noise with variance $E[\nu^i(t)\,\nu^i(t)]={R}^i>\vect{0}$. NLoS signal propagations can be distinguished from the LoS signal propagation based on a real-time signal power-based approach without any prior information about the environment~\cite{KG-AKR-YS-CLL-GC:17}. The Previous work in~\cite{JZ-SSK:19sensor} assumed that using a power-based identification method we can distinguish LoS and NLoS ranging conditions from each other with exact certainty by implementing a separation threshold. Then, the belief updates using relative inter-agent measurement processing can be carried out using the respective measurement model. However, in practice, the identification methods do not exactly identify the measurement condition with absolute certainty. As we have shown in our preliminary work~\cite{JZ-SSK:20PLANs}, if we record the power metric of our power-based modal identifier under a controlled environment where we know the true measurement type, we arrive at a probabilistic identification outcome as shown in Fig.~\ref{fig:exp_identification}. Therefore, what the power-based UWB mode discriminator is delivering is a likeliness level about the measurement mode.
Let $M(t)\in\{M_1,M_2\}$ be the modal state of the UWB ranging measurement at time $t$. The power-based UWB mode discriminator assigns a normalized probability that the measurement is in LoS (denoted by $M_1$) or NLoS (denoted by $M_2$),  
\begin{align}
    \label{eq::prob_model}
    p(M(t)=M_i),~~ i\in\{1,2\}, ~\text{where}~p(M_1)\!+\!p(M_2)=1.
\end{align}
As our experiment in Section~\ref{sec::exp_eval} shows, using a threshold in the power-based UWB mode discriminator to assign a deterministic measurement type, and then a consequent measurement processing leads to an inferior localization result.  

On the other hand, a relative measurement update (in LoS or NLoS) using a feedback gain $\vect{K}^i$ in the form of
 \begin{align}\label{eq::our-x_c}
    \Hvect{x}^{i\updt}&=\Hvect{x}^{i\prpg}+\vect{K}^i\,({z}^i_{j}-\hat{z}^i_{j}),
\end{align}
creates a correlation among the state estimates of agents $i$ and $j$, i.e., $\vect{P}_{ij}^{\updt}\neq \vect{0}$ after implementing~\eqref{eq::our-x_c}. To maintain the exact account of the updated and the propagated cross-covaraince terms for filter consistency, agents need to communicate with each other at all times. However, under limited connectivity condition, it is ideal that 
 agent $i$ and agent $j$  communicate if and only if a relative measurement is taken between them. Our objective in this paper is to design a relative measurement processing method that respects this minimal communication connectivity requirement while also takes into account the stochastic nature of the UWB ranging measurement modal variable $M(t)$.

\section{An IMM based estimator with UWB ranging feedback}
\label{sec::imm_cl}
In this section, we derive the constituting equations of the IMM filtering for estimate correction via UWB ranging feedback. To simplify the notation, we derive our equations for when an agent $i$ takes only one measurement with respect to other nodes at each time. The case of multiple concurrent measurements is discussed in the next section, when we implement our IMM based estimator in the context of the DMV based loosely coupled CL.

Note that by implementing a power-based UWB modal discriminator~\cite{KG-AKR-YS-CLL-GC:17}, a confidence level about the measurement mode can be derived with probability density function
\begin{align}
    \label{eq::prob_model_1}
    f_M(m)=\sum_{n=1}^2 p_n(t)\delta(M_m-M_n),\quad m\in\{1,2\},
\end{align}
where $p_n(t)\in[0,1]$ with $\sum_{n=1}^2p_n(t)=1$, and $\delta$ is the Dirac measure. Here the subscript $1$ represents the unbiased LoS ranging mode and the subscript $2$ represents the biased NLoS ranging mode. Note that the density function~\eqref{eq::prob_model_1} is independent of the modal history, thus for any UWB ranging mode $n\in\{1,2\}$, we can always write
\begin{align}\label{eq::modal_indep}
{P}(M(t)=&M_n|M(t-1)=M_m)\nonumber\\&=\!P(M(t)=M_n)\!=\!p_n(t), ~m\!\in\!\{1,2\}.
\end{align}
Next, let the aggregate exteroceptive measurements history taken by agent $i$ from initial time to time step $t$ be $\vect{Z}^i_{1:t}$. Moreover, let the $l$th model hypotheses sequence, through time $t$ be 
\begin{align}\label{eq::measure_history}
    M^i_{t,l}=\{M_{m_{1,l}},M_{m_{2,l}},...,M_{m_{t,l}}\},
\end{align}
where $m_{t,l}\in\{1,2\}$ is the measurement model index at time $t$. Because for each time step, there are two possible measurement models, then $2^t$ different measurement model hypotheses sequences exist at time $t$, i.e., $l\in\{1,...,2^t\}$. The conditional probability density function of the state $\vect{x}^i(t)$ at time step $t$ is obtained using the total probability theorem with respect to the mutually exclusive and exhaustive set of events~\eqref{eq::measure_history} with $l\in\{1,...,2^t\}$, as a Gaussian mixture with an exponentially increasing number of terms
\begin{align}\label{eq::optimal_combine}
    p(\vect{x}^i(t)|\vect{Z}^i_{1:t})=\sum_{l=1}^{2^t}p(\vect{x}^i(t)|M^i_{t,l},\vect{Z}^i_{1:t})P(M^i_{t,l}|\vect{Z}^i_{1:t}),
\end{align}
where $p(\vect{x}^i(t)|M^i_{t,l},\vect{Z}^i_{1:t})$ is the model-conditioned updated distribution and $P(M^i_{t,l}|\vect{Z}^i_{1:t})$ is the probability of the $l$th model hypotheses sequence conditioned on the observations. From~\eqref{eq::optimal_combine}, $2^t$ filters are needed to run in parallel to derive the exact distribution. The computational and memory complexity makes the optimal method impractical. IMM estimator~\cite[chapter 1]{YB-PKW-XT:11} is a feasible sub-optimal solution that only requires the number of filters linear to the number of models operating in parallel in each step. Following~\cite{HAPB-YBS:88}, a cycle of the IMM estimator from right after the previous measurement update up to and including the current measurement update includes the following steps:
\begin{itemize}
    \item Mixing:
    \begin{subequations}\label{eq::imm_estimator_m}
\begin{align}
&\!\!P(M(t)|\vect{Z}^i_{1:t-1})\!\leftarrow\!P(M(t\!-\!1)|\vect{Z}^i_{1:t\!-\!1}) ,\label{eq::m_1}\\
&\!\!p(\vect{x}^i(t\!-\!1)|M(t),\vect{Z}^i_{1:t-1})\!\leftarrow\! p(\vect{x}^i(t\!-\!1)|M(t\!-\!1),\vect{Z}^i_{1:t-1}),\label{eq::m_2}
\end{align}
\end{subequations}
    \item Model-based propagation:
\begin{align}\label{eq::imm_estimator_p}
p(\vect{x}^i(t)|M(t),\vect{Z}^i_{1:t-1})&\leftarrow p(\vect{x}^i(t-1)|M(t),\vect{Z}^i_{1:t-1}),
\end{align}
\item Probability evolution:
\begin{align}\label{eq::imm_estimator_e}
P(M(t)|\vect{Z}^i_{1:t})&\leftarrow P(M(t)|\vect{Z}^i_{1:t-1}),
\end{align}
\item Model-based updating:
\begin{align}\label{eq::imm_estimator_c}
  &p(\vect{x}^i(t)|M(t),\vect{Z}^i_{1:t})\leftarrow p(\vect{x}^i(t)|M(t),\vect{Z}^i_{1:t-1}),
\end{align}
\item Combination:
\begin{align}
    p(\vect{x}^i(t)|\vect{Z}^i_{1:t})&\leftarrow p(\vect{x}^i(t)|M(t),\vect{Z}^i_{1:t}) .\label{eq::imm_combine_1}
\end{align}
\end{itemize}
In IMM estimator, the cycle is initialized from the model-conditioned updated distributions $p(\vect{x}^i(t-1)|M(t-1),\vect{Z}^i_{1:t-1})$ and the model probability based on the observation history $P(M(t-1)|\vect{Z}^i_{1:t-1})$ from the previous cycle. To simplify the notation, we use $M_n(t-1)$ to represent $M(t-1)=M_n,n\in\{1,2\}$. \eqref{eq::m_1} is expanded according to the Chapman-Kolmogorov equation~\cite{AP-SUP:02} as
\begin{align}
    &P(M_n(t)|\vect{Z}^i_{1:t-1})\nonumber\\&=\sum_{m=1}^2P(M_n(t)|M_m(t-1))P(M_m(t-1)|\vect{Z}^i_{1:t-1}).
    \label{eq::imm_mixing_prop}
\end{align}
Given~\eqref{eq::modal_indep} and $\sum_{m=1}^2P(M_m(t-1)|\vect{Z}^i_{1:t-1})=1$, however,~\eqref{eq::imm_mixing_prop} results in 
\begin{align}\label{eq::mix_1}P(M_n(t)|\vect{Z}^i_{1:t-1})=p_n(t),\quad n\in\{1,2\}.\end{align}

Next, note that based on the law of total probability,~\eqref{eq::m_2} reads~as
\begin{align}
    &\!\!\!\!p(\vect{x}^i(t\!-\!1)|M_n(t),\vect{Z}^i_{1:t-1})\!=\!\!\!\!\sum_{m=1}^{2}\!p(\vect{x}^i(t\!-\!1)|M_m(t\!-\!1),\vect{Z}^i_{1:t\!-\!1})\nonumber\\&\quad\qquad \times P(M_m(t\!-\!1)|M_n(t),\vect{Z}^i_{1:t-1}), ~~n\in\{1,2\}.
    \label{eq::imm_initialization}
\end{align}
However, as Lemma~\ref{lemma::01} shows, by invoking~\eqref{eq::modal_indep},~\eqref{eq::imm_initialization} can be simplified to~\eqref{eq::first_lem}.

\begin{lem}
\label{lemma::01}
Given the probability density function~\eqref{eq::prob_model_1} model for UWB ranging mode type and~\eqref{eq::modal_indep}, then we have
\begin{align}
p(\vect{x}^i(t - 1)|M_n(t),\vect{Z}^i_{1:t-1})&=p(\vect{x}^i(t - 1)|\vect{Z}^i_{1:t-1}),\label{eq::first_lem}
\end{align}
for $n\in\{1,2\}$.
\end{lem}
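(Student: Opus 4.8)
The plan is to start from the total-probability expansion \eqref{eq::imm_initialization} and to show that the mixing weights $P(M_m(t-1)\,|\,M_n(t),\vect{Z}^i_{1:t-1})$ collapse to $P(M_m(t-1)\,|\,\vect{Z}^i_{1:t-1})$. Once this is established, \eqref{eq::imm_initialization} reduces to the ordinary law of total probability applied to $p(\vect{x}^i(t-1)\,|\,\vect{Z}^i_{1:t-1})$ over the mutually exclusive and exhaustive events $\{M_m(t-1)\}_{m=1}^{2}$, which is exactly the claimed identity \eqref{eq::first_lem}.

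First I would rewrite the mixing weight with Bayes' rule,
\[
P(M_m(t-1)\,|\,M_n(t),\vect{Z}^i_{1:t-1})=\frac{P(M_n(t)\,|\,M_m(t-1),\vect{Z}^i_{1:t-1})\,P(M_m(t-1)\,|\,\vect{Z}^i_{1:t-1})}{P(M_n(t)\,|\,\vect{Z}^i_{1:t-1})}.
\]
The denominator equals $p_n(t)$ by \eqref{eq::mix_1}. For the first factor in the numerator I would invoke \eqref{eq::modal_indep}: since the modal density \eqref{eq::prob_model_1} is independent of the modal history, the event $M_n(t)$ is independent of the entire past of the process, hence in particular of $M_m(t-1)$ jointly with $\vect{Z}^i_{1:t-1}$, so that $P(M_n(t)\,|\,M_m(t-1),\vect{Z}^i_{1:t-1})=P(M_n(t))=p_n(t)$. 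The two factors of $p_n(t)$ cancel, leaving $P(M_m(t-1)\,|\,M_n(t),\vect{Z}^i_{1:t-1})=P(M_m(t-1)\,|\,\vect{Z}^i_{1:t-1})$.

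Substituting this back into \eqref{eq::imm_initialization} gives
\[
p(\vect{x}^i(t-1)\,|\,M_n(t),\vect{Z}^i_{1:t-1})=\sum_{m=1}^{2}p(\vect{x}^i(t-1)\,|\,M_m(t-1),\vect{Z}^i_{1:t-1})\,P(M_m(t-1)\,|\,\vect{Z}^i_{1:t-1}),
\]
and the right-hand side is the total-probability decomposition of $p(\vect{x}^i(t-1)\,|\,\vect{Z}^i_{1:t-1})$, which finishes the proof for both $n\in\{1,2\}$. I expect the only genuine subtlety to be the justification of the step $P(M_n(t)\,|\,M_m(t-1),\vect{Z}^i_{1:t-1})=p_n(t)$: as stated, \eqref{eq::modal_indep} controls only the dependence on the immediately preceding mode, so one must appeal to the underlying modeling assumption behind \eqref{eq::prob_model_1} that the current UWB ranging mode is generated independently of all past states, measurements, and modes. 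This assumption should be made explicit rather than treated as self-evident, since it is precisely what allows the conditioning on $\vect{Z}^i_{1:t-1}$ to be dropped.
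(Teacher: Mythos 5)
Your proof is correct and follows essentially the same route as the paper's: Bayes' rule on the mixing weight, cancellation of the $p_n(t)$ factors via \eqref{eq::modal_indep} and the normalization $\sum_m P(M_m(t-1)\,|\,\vect{Z}^i_{1:t-1})=1$, and then recognition of the total-probability decomposition. Your closing remark is also well taken --- the paper performs the same replacement $P(M_n(t)\,|\,M_m(t-1),\vect{Z}^i_{1:t-1})=p_n(t)$ without comment, silently strengthening \eqref{eq::modal_indep} to independence from the measurement history as well as the previous mode.
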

\begin{proof}
Note that by virtue of Bayes rule, we obtain
\begin{align}
    &\!\!\!P(M_m(t-1)|M_n(t),\vect{Z}^i_{1:t-1})\nonumber\\&=\frac{P(M_n(t)|M_m(t-1),\vect{Z}^i_{1:t-1})P(M_m(t-1)|\vect{Z}^i_{1:t-1})}{\sum_{m=1}^{2}P(M_n(t)|M_m(t-1))P(M_m(t-1)|\vect{Z}^i_{1:t-1})}.
    \label{eq::imm_initialization_prop}
\end{align}
By virtue of~\eqref{eq::modal_indep},~\eqref{eq::imm_initialization_prop} can be written as
\begin{align*}
       &P(M_m(t-1)|M_n(t),\vect{Z}^i_{1:t-1})\nonumber\\&\qquad=\frac{p_n(t)P(M_m(t-1)|\vect{Z}^i_{1:t-1})}{\sum_{m=1}^{2}p_n(t)P(M_m(t-1)|\vect{Z}^i_{1:t-1})}\\&\qquad=P(M_m(t-1)|\vect{Z}^i_{1:t-1}),
\end{align*}
for $n\in\{1,2\}$, and $m\in\{1,2\}$. Here, we also used $\sum_{m=1}^{2}P(M_m(t-1)|\vect{Z}^i_{1:t-1})=1$. Then~\eqref{eq::imm_initialization}, for $n\in\{1,2\}$, is equivalent to
\begin{align*}
        &p(\vect{x}^i(t-1)|M_n(t),\vect{Z}^i_{1:t-1})\\=&\sum_{m=1}^{2}p(\vect{x}^i(t-1)|M_m(t-1),\vect{Z}^i_{1:t-1})P(M_m(t-1)|\vect{Z}^i_{1:t-1}).
\end{align*}
From~\eqref{eq::imm_combine} we can write 
\begin{align*}
p(\vect{x}^i(t-1)|M_n(t),\vect{Z}^i_{1:t-1})=p(\vect{x}^i(t-1)|\vect{Z}^i_{1:t-1}),
\end{align*}
for $n\in\{1,2\}$, which concludes the proof.
\end{proof}

Lemma~\ref{lemma::01} states that~\eqref{eq::m_2} of mixing step is not needed and the model-conditioned posterior distribution $p(\vect{x}^i(t - 1)|M_n(t),\vect{Z}^i_{1:t-1})$ can be determined directly from the posterior distribution $p(\vect{x}^i(t - 1)|\vect{Z}^i_{1:t-1})$ of the previous cycle. Then, since the propagation model of agent $i$ does not have dependency on modal state, the model-based propagation~\eqref{eq::imm_estimator_p} then is simply obtained from propagating the posterior distribution from previous step through the system model, i.e., 
$$
p(\vect{x}^i(t)|\vect{Z}^i_{1:t-1})\leftarrow p(\vect{x}^i(t-1)|\vect{Z}^i_{1:t-1}),
$$
i.e., 
In summary, the flow shown in Fig.~\ref{fig::flow_imm_origin} without mixing step~\eqref{eq::imm_estimator_m} and model-based propagation step~\eqref{eq::imm_estimator_p} will be equivalent to traditional IMM estimator given the UWB ranging mode type probability density function~\eqref{eq::prob_model_1}. This property simplifies the IMM estimator and makes implementation of the IMM CL easier as an augmentation service atop the local filters. 
\begin{figure}
    \centering
    \includegraphics[scale=0.34]{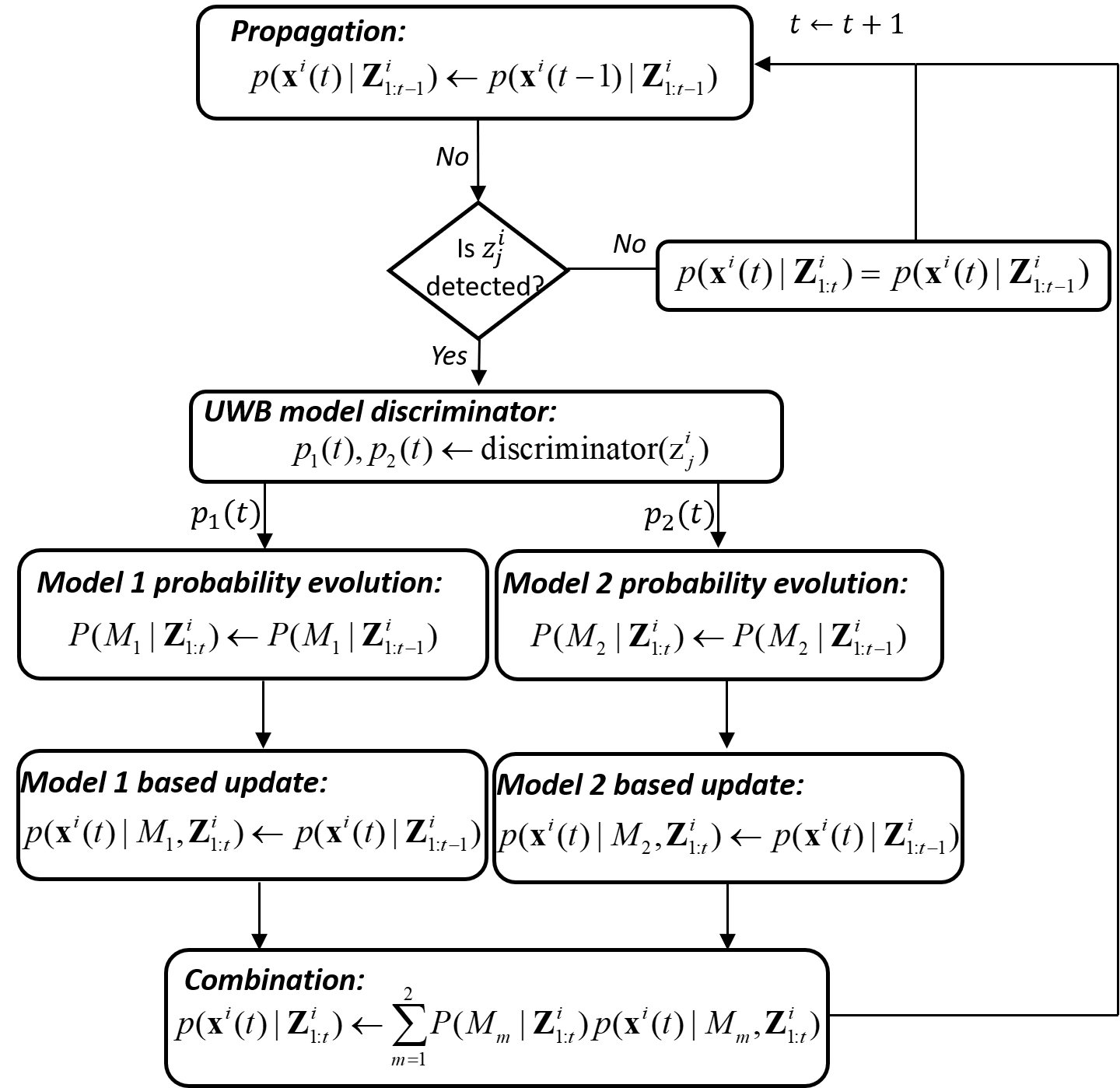}
    \caption{{\small One cycle of IMM estimator of agent $i$ with UWB ranging correction feedback.}}
    \label{fig::flow_imm_origin}
\end{figure}
Next, we note that the propagated distribution is updated in two parallel process conditioned on different measurement models as in Fig.~\ref{fig::flow_imm_origin} to derive the model-conditioned updated distribution $p(\vect{x}^i(t)|M_n(t),\vect{Z}^i_{1:t})$. For $n\in\{1,2\}$, the model probability is evolved according to 
\begin{align}
  \!\!  P(M_n(t)|\vect{Z}^i_{1:t})\!=\!\!\frac{P(z_j^i(t)|M_n(t),\vect{Z}^i_{1:t-1})P(M_n(t)|\vect{Z}^i_{1:t-1})}{\sum\limits_{m=1}^{2}\!\!\!P(z_j^i(t)|M_m(t),\vect{Z}^i_{1:t-1})P(M_m(t)|\vect{Z}^i_{1:t-1})}
    \label{eq::imm_evolve}
\end{align}
$P(z_j^i(t)|M_n(t),\vect{Z}^i_{1:t-1})$ is the model-conditioned likelihood, which can be derived from the likelihood function of the model $M(t)$ if the distribution is Gaussian as follows~\cite[chapter 2]{YBS-XRL-TK:01}  
\begin{align}
  p(z^{i}_{j}(t)|M_n(t),\vect{Z}^{i}_{1:t})=\frac{\textup{e}^{(-{\tilde{z}_{j_n}^{i}}{}^2/2{S_{j_n}^i})}}{\sqrt{2\pi|{S_{j_n}^i}|}},  
\end{align}
where $\tilde{z}_{j_n}^{i}=z_{j}^{i}-\hat{z}_{j_n}^{i}$ and $S_{j_n}^i$ are the model-matched innovation and corresponding covariance. In IMM approach, the optimal estimate~\eqref{eq::optimal_combine} is approximated finally by
\begin{align}
    p(\vect{x}^i(t)|\vect{Z}^i_{1:t})=\sum_{n=1}^{2}P(M_n(t)|\vect{Z}^i_{1:t})p(\vect{x}^i(t)|M_n(t),\vect{Z}^i_{1:t}).
    \label{eq::imm_combine}
\end{align}


\begin{figure*}
    \centering
    \includegraphics[scale=0.2]{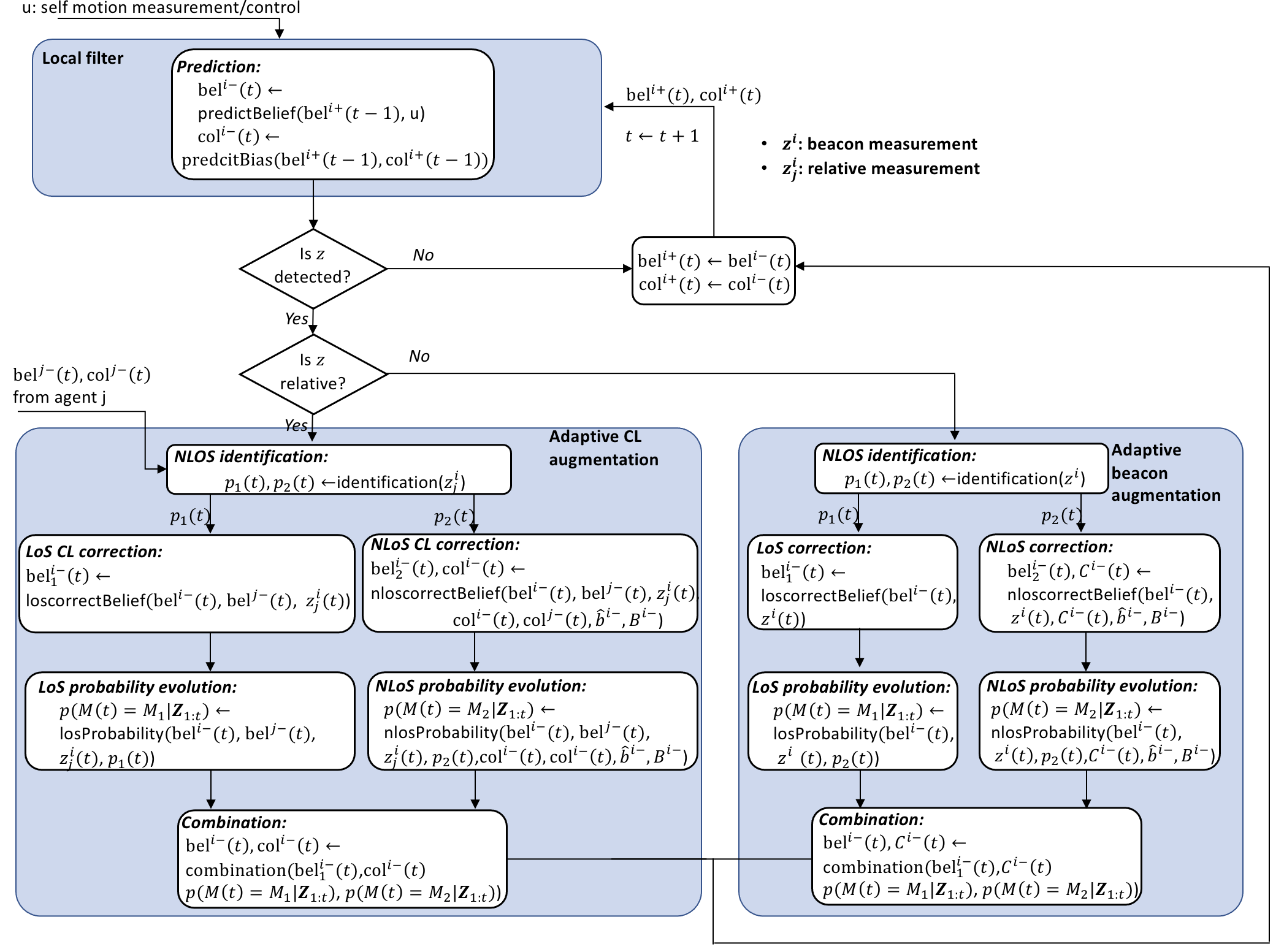}
    \caption{{\small The proposed AUCL, an augmentation atop of the local filter of agent $i$ becomes active when there is an inter-agent UWB range measurement. It contains two parallel updating filter, one is used to process unbiased LoS measurement and the other one is used to process NLoS measurement with bias compensation. }}
    \label{fig:flow_cl}
\end{figure*}
\section{An IMM based cooperative localization via UWB inter-agent ranging}
Given the IMM estimator in Fig.~\ref{fig::flow_imm_origin} for the UWB ranging correction feedback, we propose the adaptive UWB-based cooperative localization (AUCL) algorithm shown in Fig.~\ref{fig:flow_cl} to process the UWB-based relative range measurements taken by agent $i$ from another node $j$ in the form of a loosely coupled augmentation.  To develop our loosely coupled CL we employ the DMV approach of~\cite{JZ-SSK-TRO:19}. For notational simplicity, our algorithm is depicted for when there is a single relative measurement taken by agent $i$ at each time. To process multiple concurrent relative measurements, we use sequential updating (see~\mbox{\cite[page 103]{YB-PKW-XT:11}}). That is, agent $i$ first collects the local belief of the agents that it has taken relative measurements from at time $t$. Then, it processes them via our proposed methods one after the other by using its previously updated belief as its local belief. In what follows, we explain the components of the AUCL algorithm. 

\textbf{\textit{predictBelief} function ( $p(\vect{x}^i(t)|\vect{Z}^i_{1:t-1}))\leftarrow p(\vect{x}^i(t-1)|\vect{Z}^i_{1:t-1})$):}  At each time step $t\in\mathbb{Z}^{+}$, the dead-reckoning system (e.g., INS or odometery) of each agent $i$ propagates an estimate of the ego state $\Hvect{x}^{i\prpg}(t)=\mathsf{f}^i(\Hvect{x}^{i\updt}(t-1),\vect{u}^i(t))\in\real^{n_x}$ and the corresponding positive definite error covariance matrix $\vect{P}^{i\prpg}(t)\in\mathbb{S}^{++}_{n_x}$, in a global frame (e.g., the global earth-fixed coordinate frame with axes pointing north, east and down for an INS system). This dead-reckoning process is executed through the \textit{predictBelief} function in the AUCL algorithm. When there is no exteroceptive measurement to update the local belief, we set $\text{bel}^{i\updt}(t)=\text{bel}^{i\prpg}(t)=(\Hvect{x}^{i\prpg}(t),\vect{P}^{i\prpg}(t))$, otherwise we proceed to correct the belief as outlined below.

 \textbf{\textit{loscorrectBelief} function} $(p(\vect{x}^i(t)|M_1(t),\vect{Z}^i_{1:t})\leftarrow p(\vect{x}^i(t)|\vect{Z}^i_{1:t-1})$): Let the relative range measurement $z^i_j(t)$ by agent $i$ from any \emph{mobile} agent $j$ be in LoS. Since there is no bias in the measurement, to correct the local belief of agent $i$ using this measurement (\textit{loscorrectBelief} function in Fig.~\ref{fig:flow_cl}), we employ the DMV update. The idea in DMV approach is that instead of maintaining the cross-covariance term $\vect{P}^{\prpg}_{ij}$ in the joint covariance matrix of any two agents $i$ and $j$, we use the conservative upper bound below 
\!\!\!\!\begin{align}\label{eq::PJ-BPJ}
\!\!\begin{bmatrix}
 \vect{P}^{i\prpg}(t)&\!\!\vect{P}_{ij}^{\prpg}(t)\\
 {\vect{P}_{ij}^{\prpg}}(t)^\top&\!\!\vect{P}^{j\prpg}(t)
 \end{bmatrix}\!\leq \begin{bmatrix}\frac{1}{\omega}\vect{P}^{i\prpg}(t)&\vect{0}\\\vect{0}&\!\!\frac{1}{1-\omega}\vect{P}^{j\prpg}(t)
\end{bmatrix}
\!,~\omega\!\in\![0,1],
\end{align}
to obtain $\Bvect{\mathsf{P}}^{i}(\omega)$ that satisfies $\mathrm{E}_{\text{f}}[(\vect{x}^i-\Hvect{x}^{i\updt})(\vect{x}^i-\Hvect{x}^{i\updt})^\top]\leq\Bvect{\mathsf{P}}^{i}(\omega)$, and has no dependency on $\vect{P}_{ij}^{\prpg}(t)$.
Then, a `minimum variance' like update gain $\vect{K}^i$ in~\eqref{eq::our-x_c} is obtained from minimizing the trace of $\Bvect{\mathsf{P}}^{i}(\omega)$. Following~\cite{JZ-SSK-TRO:19}, the updated belief by processing LoS measurements $\text{bel}^{i\updt}_{\text{1}}(t)=(\Hvect{x}^{i\updt}_{\text{1}}(t),\vect{P}^{i\updt}_{\text{1}}(t))$ (subscript 1 is used to represent LoS condition for simplicity) for agent $i$ is ((\textit{loscorrectBelief}) function in Fig.~\ref{fig:flow_cl})
\begin{align*}
\Hvect{x}_1^{i\updt}&=\Hvect{x}^{i\prpg}_1+\Bvect{\mathsf{K}}_{1}^i(\omega_\star^i)\,({z}^i_{j}-\hat{z}^i_{j1}),~~\vect{P}^{i\updt}_1=\Bvect{\mathsf{P}}^{i}(\omega_\star^i),
\end{align*}
where 
\begin{align}\label{eq::Kl_omega}
    \Bvect{\mathsf{K}}^i(\omega)\!=\! \frac{\vect{P}^{i\prpg}}{\omega}{\vect{H}_i^i}^{\top}\big(\vect{H}_{i}^i\frac{\vect{P}^{i\prpg}}{\omega}{\vect{H}_{i}^i}^\top\!\!\!+\!\vect{H}_{j}^i\frac{\vect{P}^{j\prpg}}{1\!-\!\omega}{\vect{H}_{j}^i}\!^\top\!\!\!+\!{R}^i\big)^{-1}\!\!\!.
\end{align}
Using this gain that minimizes the trace of $\Bvect{\mathsf{P}}^{i}(\omega)$, we obtain
\begin{align}\label{eq::P_l_omega_inverse}
    \Bvect{\mathsf{P}}^{i}(\omega)=&
  \big(\omega(\vect{P}^{i\prpg})^{-1}+(1-\omega){\vect{H}_i^{i}}^{\top}(\vect{H}_j^{i}\vect{P}^{j\prpg}{\vect{H}_j^{i}}^{\top}\nonumber\\&~~~~~~~\qquad\qquad\qquad+(1-\omega){R}^{i})^{-1}\vect{H}_i^{i}\big)^{-1}\!.
\end{align} 
where the optimal $\omega^\star\in[0,1]$ is obtained from
\begin{align}\label{eq::omega_star}
\omega^i_\star=\underset{0\leq\omega\leq 1}{\argmin} ~\log\det\,&\Bvect{\mathsf{P}}^{i}(\omega),
\end{align}
where $\vect{H}^i_{i}\!=\!\partial {h}(\Hvect{x}^{i\prpg}\!,\Hvect{x}^{j\prpg})/\partial{\vect{x}}^i$ and $\vect{H}^i_{j}\!=\!\partial {h}(\Hvect{x}^{i\prpg}\!,\Hvect{x}^{j\prpg})/\partial{\vect{x}}^j$ are elements of the linearized model of $h(\Hvect{x}^{i\prpg},\Hvect{x}^{j\prpg})$.

\textbf{\textit{nloscorrectBelief} function $(p(\vect{x}^i(t)|M_2(t),\vect{Z}^i_{1:t})\leftarrow p(\vect{x}^i(t)|\vect{Z}^i_{1:t-1})$):} Let the relative range measurement $z^i_j(t)$ by agent $i$ from \emph{mobile} agent $j$ be in NLoS. To account for the measurement bias, recall~\eqref{eq::measur_ij_bias}, we use a SKF framework\cite{YBS-PKW-XT:11}. In SKF, the bias is appended to the states as a random variable, to account for its uncertainty and correlation with the state estimate but its value is not updated based on the measurement feedback. We let the joint extended state and the prior belief of agent $i$ and $j$ at time $t$ be, respectively,
$\vect{x}_J(t)\!=\!(\vect{x}^{i}(t)^\top\!, \vect{x}^{j}(t)^\top,b^{i\prpg}(t))^\top$, and $\text{bel}_J^{\prpg}(t)\!=\!(\Hvect{x}_J^{\prpg}(t),\vect{P}_J^{\prpg}(t))$~where
\begin{align*}
\Hvect{x}_J^{\prpg}(t)&\!=\!\!\begin{bmatrix}\Hvect{x}^{i\prpg}(t)\\ \Hvect{x}^{j\prpg}(t)\\\hat{b}^{i\prpg}(t)\end{bmatrix}\!\!,~
\vect{P}_{J}^{\prpg}(t)\!=\!\!\begin{bmatrix}\vect{P}^{i\prpg}(t)&\vect{P}_{ij}^{\prpg}(t)&\vect{C}^{ii\prpg}(t)\\
{\vect{P}_{ij}^{\prpg}}^{\top}(t)&\vect{P}^{j\prpg}(t)&\vect{C}^{ji\prpg}(t)\\{\vect{C}^{ii\prpg}}(t)^{\top}&{\vect{C}^{ji\prpg}}(t)^{\top}&B^{i\prpg}(t)\end{bmatrix}\!\!,\end{align*}
where the state-bias cross-covariance terms are $\vect{C}^{ii\prpg}=\mathrm{E}_{\text{f}}[(\vect{x}^i-\Hvect{x}^{i\prpg})(b^i-\hat{b}^{i\prpg})]$ and $\vect{C}^{ji\prpg}=\mathrm{E}_{\text{f}}[(\vect{x}^j-\Hvect{x}^{j\prpg})(b^i-\hat{b}^{i\prpg})]$. We note that $\vect{C}^{ji\prpg}\neq {\vect{C}^{ij\prpg}}^\top$, because $\vect{C}^{ij\prpg}=\mathrm{E}_{\text{f}}[(\vect{x}^i-\Hvect{x}^{i\prpg})(b^j-\hat{b}^{j\prpg})]$, where $b^j$ is the bias in the measurements taken by agent $j$. Finally, note that $B^{i\prpg}=\mathrm{E}[(b^i-\hat{b}^{i\prpg})(b^i-\hat{b}^{i\prpg})]=(\bar{\phi}^i)^2+\Phi^i$. Here, $\mathrm{E}_f[.]$ indicates that the expectation is taken over first-order approximate relative measurement or system models. 

We let every agent $i$ to maintain and propagate the set  $\{\vect{C}^{ij\prpg}\}_{i=1}^N$ of state-bias cross-covariances  between its local state and the bias in the measurements of all the agents $\{1,\cdots,N\}$, which is given by
\begin{align}
\label{eq::update_sb_correlation_dmv_prop}
\vect{C}^{il\prpg}(t+1)&=\vect{F}^{i}(t)\vect{C}^{il\updt}(t), \quad l\in\{1,\cdots,N\}, 
\end{align}
where $\vect{F}^{i}(t)=\!\partial \mathsf{f}^i(\Hvect{x}^{i\updt}(t-1),\vect{u}^{i}(t))/\partial{\vect{x}}^i$. Initially $\vect{C}^{il\updt}(0)=\vect{0}$, the state-bias cross-covariance terms become non-zero as agents update their states using inter-agent relative measurements. As seen in~\eqref{eq::update_sb_correlation_dmv_prop}, the propagated state-bias cross covariance terms of agent $i$ are computed locally. We show below also that these terms can be updated using local variables of agent $i$ and the state-bias cross-covaraince terms of agent $j$. Therefore, using the DMV type approach, we only need to account for lack of knowledge of $\vect{P}_{ij}^{\prpg}$, when we want to update states of agent $i$. 

We note that since~\eqref{eq::PJ-BPJ} holds, we can also write 
\begin{align*}
&
\vect{P}_{J}^{\prpg}(t)\leq
\begin{bmatrix}\frac{1}{\omega}\vect{P}^{i\prpg}(t)&\vect{0}&\vect{C}^{ii\prpg}(t)\\
\vect{0}&\frac{1}{1-\omega}\vect{P}^{j\prpg}(t)&\vect{C}^{ji\prpg}(t)\\{\vect{C}^{ii\prpg}}^{\top}(t)&{\vect{C}^{ji\prpg}}^{\top}(t)&B^{i\prpg}(t)\end{bmatrix}\,,~\omega\in[0,1].
\end{align*}
Then, by taking into account that in the SKF framework, agent $i$ updates its extended prior states according to~\eqref{eq::our-x_c} and 
\begin{align}\label{eq::our-x_c_biased}
    ~\hat{b}^{i\updt}(t)= \hat{b}^{i\prpg}(t),\quad B^{i\updt}(t)=B^{i\prpg}(t).
\end{align}
Note that 
\begin{align}\label{eq::dmv-SKF-bound}
&\mathrm{E}_{\text{f}}[(\vect{x}^i-\Hvect{x}^{i\updt})(\vect{x}^i-\Hvect{x}^{i\updt})^\top]\leq \Bvect{\mathsf{P}}^{i}(\omega,\vect{K}^{i})=\nonumber \\
&\begin{bmatrix}(\vect{I}\!-\!\vect{K}^i\vect{H}_i^i)&-\vect{K}^i\vect{H}_j^i&-\vect{K}^i
\end{bmatrix}\begin{bmatrix}\frac{1}{\omega}\vect{P}^{i\prpg}&\vect{0}&\vect{C}^{ii\prpg}\\
\vect{0}&\frac{1}{1-\omega}\vect{P}^{j\prpg}&\vect{C}^{ji\prpg}\\{\vect{C}^{ii\prpg}}^{\top}&{\vect{C}^{ji\prpg}}^{\top}&B^i\end{bmatrix}\nonumber\\
&\times\begin{bmatrix}(\vect{I}\!-\!\vect{K}^i\vect{H}_i^i)&-\vect{K}^i\vect{H}_j^i&-\vect{K}^i\end{bmatrix}^\top
\!\!+\vect{K}^i{R}^i{\vect{K}^i}^\top\!\!
\end{align}
for any $\omega\in[0,1]$, where we used the first-order expansion of ${h}(\vect{x}^{i},\vect{x}^{j})$ about $\Hvect{x}_J^{\prpg}$ described by $
  {h}(\vect{x}^i,\vect{x}^j)\approx\,
  {h}(\Hvect{x}^{i\prpg},\Hvect{x}^{j\prpg})\!+\!\vect{H}^i_{i}\,(\vect{x}^i\!\!-\!\Hvect{x}^{i\prpg})\!+\!\vect{H}^i_{j}\,(\vect{x}^j\!\!-\!\Hvect{x}^{j\prpg})+
(b^i-\hat{b}^i)$. The gain is found by minimizing the mean square error of the upper bound~\eqref{eq::dmv-SKF-bound} $\Bvect{\mathsf{K}}^i(\omega)=\underset{\vect{K}^i}{\argmin} \text{Tr}(\Bvect{\mathsf{P}}^{i}(\omega,\vect{K}^{i}))$, which gives us 
\begin{align}\label{eq::dmv-skf-gain_omega}
    \Bvect{\mathsf{K}}^i(\omega)=& (\frac{1}{\omega}\vect{P}^{i\prpg}{\vect{H}_i^i}^{\top}+\vect{C}^{ii\prpg}){\vect{S}_k^i}^{-1}.
\end{align}
where
\begin{align}\label{eq::sij}
    \vect{S}_k^l=&\,\vect{H}_{i}^i\frac{\vect{P}^{i\prpg}}{\omega}{\vect{H}_{i}^i}^\top\!\!\!+\!\vect{H}_{j}^i\frac{\vect{P}^{j\prpg}}{1\!-\!\omega}{\vect{H}_{j}^i}\!^\top+\vect{H}_i^i\vect{C}^{ii\prpg}+\vect{H}_j^i\vect{C}^{ji\prpg}\nonumber\\&+{\vect{C}^{ii\prpg}}^{\top}{\vect{H}_i^i}^{\top}+{\vect{C}^{ji\prpg}}^{\top}{\vect{H}_j^i}^{\top}+B^i+{R}^i
\end{align}

Using this gain, $\Bvect{\mathsf{P}}^{i}(\omega,\Bvect{\mathsf{K}}^{i}(\omega))$ in~\eqref{eq::dmv-SKF-bound} reads as 
\begin{align}\label{eq::P_l_omega}
    \Bvect{\mathsf{P}}^{i}(\omega)=\,&\Bvect{\mathsf{P}}^{i}(\omega,\Bvect{\mathsf{K}}^{i}(\omega))=\frac{\vect{P}^{i\prpg}}{\omega}-(\frac{\vect{P}^{i\prpg}}{\omega}{\vect{H}_i^i}^{\top}+\vect{C}^{ii\prpg})\times \nonumber\\
    &{\vect{S}_j^i}^{-1}(\frac{\vect{P}^{i\prpg}}{\omega}{\vect{H}_i^i}^{\top}+\vect{C}^{ii\prpg})^{\top}.
\end{align}
We obtain $\omega^i_\star$, the optimal $\omega\in[0,1]$,  from~\eqref{eq::omega_star} with  $\Bvect{\mathsf{P}}^{i}(\omega)$ given in~\eqref{eq::P_l_omega}.
Subsequently, the SKF based \textit{nloscorrectBelief} updated belief $\text{bel}^{i\updt}_{2}(t)=(\Hvect{x}^{i\updt}_{2}(t),\vect{P}^{i\updt}_{2}(t))$ for agent $i$ is 
\begin{align*} \Hvect{x}_{2}^{i\updt}&=\Hvect{x}^{i\prpg}+\vect{K}^i_{2}\,({z}^i_{j}-\hat{z}^i_{j}),\\
\vect{P}_{2}^{i\updt}&=\Bvect{\mathsf{P}}^{i}(\omega_\star^i),
\end{align*}
while the bias is updated according to~\eqref{eq::our-x_c_biased}. The $nloscorrentBelief$ corresponds to the model 2 based update ins Fig~\ref{fig::flow_imm_origin}. Moreover, the state-bias cross-covariances are updated according~to
\begin{align*}
\vect{C}^{il\updt}&=\mathrm{E}_{\text{f}}[(\vect{x}^i-\Hvect{x}^{i\updt})({b}^l-\hat{b}^{l\updt})]\\&=\mathrm{E}[((\vect{I}-\vect{K}^i_{2}\vect{H}_i^i)\Tvect{x}^{i\prpg}-\vect{K}^i_{2}\vect{H}_j^i\Tvect{x}^{j\prpg} - \vect{K}^i_{2}\tilde{b}^{i\prpg})\tilde{b}^{l\prpg}],
\end{align*}
where $\Tvect{x}^{k\prpg}= (\vect{x}^k-\Hvect{x}^{k\updt})$ and $\tilde{b}^{k\prpg}=({b}^k-\hat{b}^{k\prpg})$, $k\in\{1,\cdots,N\}$. Then, we can write
\begin{align*}
\vect{C}^{il\updt}&=
    \begin{cases}
    (\vect{I}-\vect{K}^i_{2}\vect{H}_i^i)\vect{C}^{ii\prpg}-\vect{K}^i_{2}\vect{H}_j^i\vect{C}^{ji\prpg} - \vect{K}^i_{2}{B}^{i\updt},& l=i\\
    (\vect{I}-\vect{K}^i_{2}\vect{H}_i^i)\vect{C}^{il\prpg}-\vect{K}^i_{2}\vect{H}_j^i\vect{C}^{jl\prpg},& l\neq i
    \end{cases}
\end{align*}
for any $l\in\{1,\cdots,N\}$. Here, $\vect{K}^i_{2}$ is given by~\eqref{eq::dmv-skf-gain_omega} evaluated at $\omega^i_\star$.

\textbf{\textit{PredictBias}  function:} This function given by~\eqref{eq::update_sb_correlation_dmv_prop} propagates the set  $\{\vect{C}^{il\prpg}\}_{l=1}^N$ of state-bias cross-covariances  between the local state of agent $i$ and the bias in the measurements of all the agents $\{1,\cdots,N\}$ locally.

\textbf{\textit{losProbability} and \textit{nlosProbability} functions:} These functions calculate the model $n\in\{1,2\}$ probability evolution in Fig.~\ref{fig::flow_imm_origin}, and their function is given by~\eqref{eq::imm_evolve}.

\textbf{\textit{combination} function:} This function realizes the last step in the IMM-based estimator of Fig.~\ref{fig::flow_imm_origin} given by~\eqref{eq::imm_combine}. Considering a Gaussian process, the combined belief $\text{bel}^{i\updt}(t)$ according to~\eqref{eq::imm_combine} is given by 
\begin{subequations}
\begin{align}\label{eq:update_combination_belief}
    \Hvect{x}^{i\updt}(t)=&\sum^{2}_{n=1}P(M_n(t)|\vect{Z}_{1:t})\Hvect{x}^{i\updt}_{n}(t),\\
    \vect{P}^{i\updt}(t)=&\sum^{2}_{n=1}P(M_n(t)|\vect{Z}_{1:t})(\vect{P}^{i\updt}_{n}(t)+\Bvect{P}^i_{n}(t)),
\end{align}
\end{subequations}
where $\Bvect{P}_{n}(t)=(\Hvect{x}^{i\updt}_{n}(t)-\Hvect{x}^{i\updt}(t))(\Hvect{x}^{i\updt}_{n}(t)-\Hvect{x}^{i\updt}(t))^{\top}$. The state-bias cross-covariance is affected by the combination of state and becomes
$\vect{C}^{il\updt}(t)=P(M_2|\vect{Z}_{1:t})\vect{C}^{il\updt}(t)$, $l\in\{1,\cdots,N\}$.

\textbf{Inter-agent communication}: To perform \textit{loscorrectBelief} function the local belief $\text{bel}^{j\prpg}$ of agent $j$ should be communicated to agent $i$. To preform \textit{nloscorrectBelief} function, besides the local belief $\text{bel}^{j\prpg}$, agent $j$ should transmit its  state-bias correlation set $col^{i\prpg}=\{\vect{C}^{jl\prpg}\}_{l=1}^N$ to agent $i$, as well.

\begin{rem}[Reducing the communication message size of the AUCL algorithm]{\rm
To reduce the communication message size/cost, we can allow agents to drop exact tracking of the inter-agent state-bias cross-covariance terms, and instead account for them implicitly. To do so, we write the joint extend state of agent $i$ and $j$ as $\vect{x}_J(t)\!=\!(\vect{x}^{i}(t)^\top\!,b^{i\prpg}(t), \vect{x}^{j}(t)^\top)^\top$, with the corresponding joint belief  $\text{bel}_J^{\prpg}(t)\!=\!(\Hvect{x}_J^{\prpg}(t),\vect{P}_J^{\prpg}(t))$, where $\vect{P}_{J}^{\prpg}(t)=\left[\begin{smallmatrix}
\vect{P}^{i\prpg}(t)&\vect{C}^{ii\prpg}(t)&\vect{P}_{ij}^{\prpg}(t)\\
\vect{C}^{ii}(t)^\top&B^{i\prpg}(t)&\vect{C}^{ji\prpg}(t)^\top\\
{\vect{P}_{ij}^{\prpg}(t)}^\top&\vect{C}^{ji\prpg}(t)&\vect{P}^{j\prpg}(t)
\end{smallmatrix}\right]$. Then to account for lack of knowledge about $\vect{C}^{ji\prpg}$, we use the upper bound on $\vect{P}_{J}^{\prpg}(t)$ in 
\begin{align}\label{eq::P_J-Cbound}
&
\vect{P}_{J}^{\prpg}(t)\leq
\begin{bmatrix}
\frac{1}{\omega}\begin{bmatrix}
\vect{P}^{i\prpg}(t)&\vect{C}^{ii\prpg}\\\vect{C}^{ii}(t)^\top&B^{i\prpg}(t)
\end{bmatrix}&\vect{0}\\
\vect{0}&\frac{1}{1-\omega}\vect{P}^{j\prpg}(t)
\end{bmatrix}\,,~\omega\in[0,1].
\end{align}
to obtain a $\Bvect{\mathsf{P}}^{i}(\omega,\vect{K}^{i})$ that satisfies $\mathrm{E}_{\text{f}}[(\vect{x}^i-\Hvect{x}^{i\updt})(\vect{x}^i-\Hvect{x}^{i\updt})^\top]\leq \Bvect{\mathsf{P}}^{i}(\omega,\vect{K}^{i})$ and does not depend on $\vect{P}_{ij}^{\prpg}$ and $\vect{C}^{ji\prpg}$. Then, we can obtain the update gain and the subsequent updates estimate and the covariance from a process similar to the one that follows~\eqref{eq::dmv-SKF-bound}.
Next, we note that 
\begin{align}\label{eq::dmv-SKF-bound_C}
&\mathrm{E}_{\text{f}}[(\vect{x}_J-\Hvect{x}_J^{\updt})(\vect{x}_J-\Hvect{x}_J^{\updt})^\top]\leq \nonumber \\
&(\vect{I}\!-\!\vect{K}_J\vect{H}_J)\begin{bmatrix}
\frac{1}{\omega^\star}\begin{bmatrix}
\vect{P}^{i\prpg}(t)&\vect{C}^{ii\prpg}\\\vect{C}^{ii}(t)^\top&B^{i\prpg}(t)
\end{bmatrix}&\vect{0}\\
\vect{0}&\frac{1}{1-\omega^\star}\vect{P}^{j\prpg}(t)
\end{bmatrix}\nonumber\\
&\quad \times(\vect{I}\!-\!\vect{K}_J\vect{H}_J)^\top+\vect{K}_J{R}^i{\vect{K}_J}^\top,
\end{align}
where $\vect{K}_J=
\begin{bmatrix}
{\vect{K}_2^i}^\top
&0
&\vect{0}\end{bmatrix}^\top$ and 
$\vect{H}_J=\begin{bmatrix}\vect{H}_i^i&1&\vect{H}_j^i\end{bmatrix}$.
Here, recall that $\vect{x}^{j\updt}(t)=\vect{x}^{j\prpg}(t)$, and $\hat{b}^{i\updt}(t)=\hat{b}^{i\prpg}(t)$.
Then to update the state-bias covariance for agent $i$, we use the corresponding component of the conservative upper bound in~\eqref{eq::dmv-SKF-bound_C}, which reads as  $\vect{C}^{ii\updt}(t)$ $\vect{C}^{ii\updt}(t)=\frac{1}{\omega^\star}(\vect{I}-\vect{K}^i_{2}\vect{H}_i^i)\vect{C}^{ii\prpg} - \frac{1}{\omega^\star}\vect{K}^i_{2}{B}^{i\updt}$. \boxend
}
\end{rem}

\textbf{Update with respect to beacons}: Similar as the relative range measurement with respect to a mobile agent, we follow the IMM estimator in Fig.~\ref{fig::flow_imm_origin} to process the range measurements with respect to beacons. Since the position of beacons are exactly known without the involvement of uncertainty, we simply employ the update step of EKF for LoS correction and employ the NLoS correction from our previous work~\cite{JZ-SSK:19sensor}. The only correlation needs to update is $\vect{C}^{ii\prpg}=\mathrm{E}_{\text{f}}[(\vect{x}^i-\Hvect{x}^{i\prpg})(b^i-\hat{b}^{i\prpg})]$.

We close this section with the following lemma that shows that if the measurement model is known deterministically, our proposed IMM-based CL gives the same updated estimate that the processing based on the known mode gives.   Therefore, we can conclude that the IMM-based CL is the more general method to treat the UWB ranging correction feedback.

\begin{lem}
\label{lemma::02}
If the measurement model at any time $t$ can be identified with absolute certainty, 
i.e., $p_n(t)=0$ or $1$, $n\in\{1,2\}$, the IMM-based CL update is equivalent to simply switch between \textit{loscorrectBelief} and \textit{nloscorrectBelief}.
\end{lem}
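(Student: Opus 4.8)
The plan is to trace how the degeneracy $p_n(t)\in\{0,1\}$ propagates through the IMM cycle of Fig.~\ref{fig::flow_imm_origin} and to show that, at every stage, one of the two model-conditioned branches contributes nothing to the combined belief, so that the surviving branch is exactly \textit{loscorrectBelief} or \textit{nloscorrectBelief}. Suppose the measurement at time $t$ is identified with certainty as mode $n^\star\in\{1,2\}$, i.e., $p_{n^\star}(t)=1$ and $p_m(t)=0$ for $m\neq n^\star$.

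First I would invoke~\eqref{eq::mix_1}, which already gives $P(M_n(t)|\vect{Z}^i_{1:t-1})=p_n(t)$, so the predicted model probabilities are the hard indicators $1$ for $n^\star$ and $0$ otherwise. Substituting these into the probability-evolution step~\eqref{eq::imm_evolve}, the term indexed by $m\neq n^\star$ appears in both the numerator and the denominator multiplied by the factor $P(M_m(t)|\vect{Z}^i_{1:t-1})=0$ and therefore drops out before any division is performed; the surviving numerator and denominator then coincide, yielding $P(M_{n^\star}(t)|\vect{Z}^i_{1:t})=1$ and $P(M_m(t)|\vect{Z}^i_{1:t})=0$ for $m\neq n^\star$. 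Since the model-conditioned likelihoods $P(z_j^i(t)|M_m(t),\cdot)$ are finite, no $0/0$ indeterminacy arises once the vanishing model-prior weight is factored out.

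Next I would feed these posterior model probabilities into the combination step~\eqref{eq:update_combination_belief}. The state sum collapses to $\Hvect{x}^{i\updt}(t)=\Hvect{x}^{i\updt}_{n^\star}(t)$, which is precisely the output of \textit{loscorrectBelief} when $n^\star=1$ and of \textit{nloscorrectBelief} when $n^\star=2$. Because $\Hvect{x}^{i\updt}(t)=\Hvect{x}^{i\updt}_{n^\star}(t)$, the spread term $\Bvect{P}^i_{n^\star}(t)=(\Hvect{x}^{i\updt}_{n^\star}(t)-\Hvect{x}^{i\updt}(t))(\Hvect{x}^{i\updt}_{n^\star}(t)-\Hvect{x}^{i\updt}(t))^\top$ vanishes, and the competing branch carries weight $0$, so $\vect{P}^{i\updt}(t)=\vect{P}^{i\updt}_{n^\star}(t)$. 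Finally, the state--bias cross-covariance rescaling $\vect{C}^{il\updt}(t)=P(M_2(t)|\vect{Z}^i_{1:t})\,\vect{C}^{il\updt}(t)$ leaves $\vect{C}^{il\updt}(t)$ unchanged when $n^\star=2$ (consistent with the SKF bookkeeping inside \textit{nloscorrectBelief}) and sets $\vect{C}^{il\updt}(t)=\vect{0}$ when $n^\star=1$, which matches \textit{loscorrectBelief}, where no bias state is appended. Collecting these three facts shows that the IMM-based CL output coincides with the mode-$n^\star$ branch, i.e., with a hard switch between \textit{loscorrectBelief} and \textit{nloscorrectBelief}.

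The only delicate point — the step I would be most careful about — is the probability-evolution step~\eqref{eq::imm_evolve}: one must argue that the vanishing prior $P(M_m(t)|\vect{Z}^i_{1:t-1})=0$ cancels cleanly from numerator and denominator, so the posterior model probabilities stay the indicators $\{0,1\}$ rather than becoming indeterminate. Everything downstream is a direct substitution into~\eqref{eq:update_combination_belief} together with the cross-covariance rescaling, and requires no further computation.
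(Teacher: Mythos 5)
Your proof is correct and follows essentially the same route as the paper's: substitute~\eqref{eq::mix_1} into~\eqref{eq::imm_evolve} to show the posterior model probabilities remain the hard indicators, then observe that the combination step~\eqref{eq:update_combination_belief} collapses to the single surviving branch. Your additional bookkeeping (the vanishing spread term $\Bvect{P}^i_{n^\star}(t)$ and the cross-covariance rescaling) is a slightly more thorough elaboration of the same argument, not a different approach.
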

\begin{proof}
Consider the local belief $\text{bel}^{l\prpg}(t)$  for $l\in\{i,j\}$. Let the inter-agent range measurement $\vect{z}_j^i(t)$ detected at time $t$ be identified with absolute certainty as NLoS, i.e., $p_1(t)=0,\quad p_2(t)=1.$ 
Substituting~\eqref{eq::mix_1} into~\eqref{eq::imm_evolve}, we have that
\begin{align*}
        P(M(t)|\vect{Z}_{1:t})=&\frac{P(z_j^i(t)|M_n(t),\vect{Z}^i_{1:t})p_n(t)}{\sum_{m=1}^2P(z_j^i(t)|M_m(t),\vect{Z}^i_{1:t})p_m(t)},
\end{align*}
or $P(M_1(t)|\vect{Z}^i_{1:t})=0,\quad P(M_2(t)|\vect{Z}^i_{1:t})=1.$ Therefore, from~\eqref{eq:update_combination_belief}, we obtain
$
\text{bel}^{i\updt}(t)=\text{bel}^{i\updt}_2(t).
$ The same argument applies if the measurement is identified with absolute certainty to be in LoS.
\end{proof}

\section{Experimental evaluations}\label{sec::exp_eval}
We demonstrate the performance of our proposed AUCL algorithm via two experiments for a group of pedestrians who use UWB relative range measurements among themselves to improve their shoe-mounted INS system geolocation. The portable localization unit, shown in Fig.~\ref{fig:hardware}, that is used in these experiments consists of a foot-mounted IMU (VectorNav VN-100) and an UWB transceiver (DecaWave DWM1000) connected to a computing unit with a portable battery. 
  \begin{figure}
     \centering
     \includegraphics[scale=0.2]{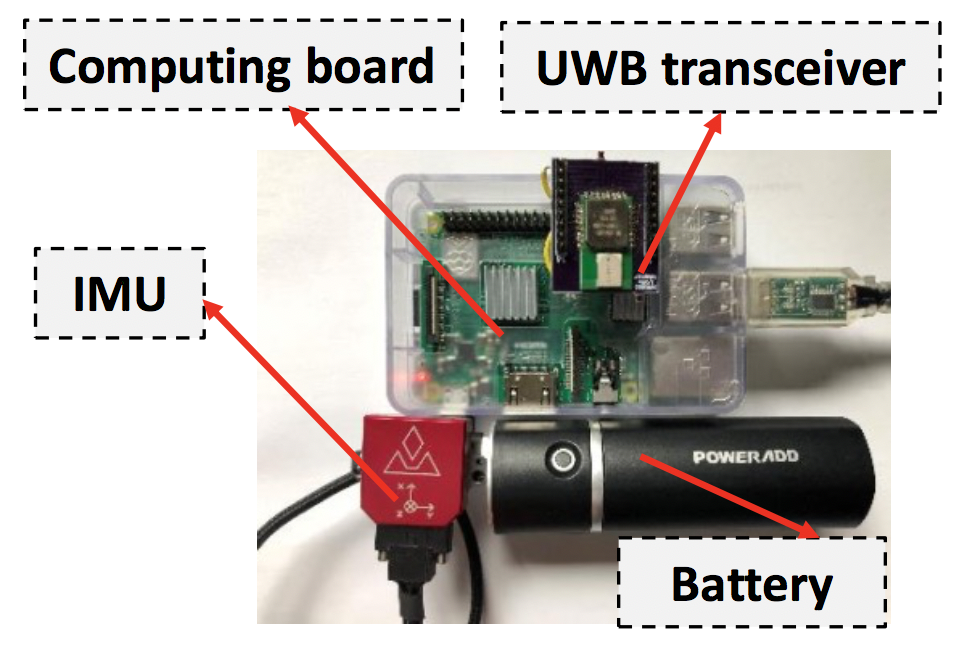}
     \caption{{\small The portable localization unit used in the experiment. The IMU mounts on the shoe.}}
     \label{fig:hardware}
 \end{figure}

Our first experiment was conducted on the second floor of the Engineering Gateway Building at the UCI campus (an indoor environment) with the floor plan that is shown in Fig.~\ref{fig:exp_res_second}. In this experiment, two pedestrians walked along a pre-defined reference trajectory shown by the black solid plot in Fig.~\ref{fig:exp_res_second}. They started from the black cross and went counter-clockwise. In this experiment, only agent $2$ has access to the beacon with a known position outside of the building. Beacon used to let agent $2$ have a better localization accuracy than agent $1$. Then, agent $1$ improves its own localization estimate by processing inter-agent range measurements with respect to agent $2$. The experiment demonstrates the benefit attained by cooperative localization, especially highlighting how access to absolute exteroceptive measurement by one agent can benefit others.
Since we know a priori that the beacon is outside of the building, and thus all the measurements between agent $2$ and the beacon should be in NLoS, agent $2$ processes measurements collected with respect to the beacon with $p_2(t)=1$. In this experiment, agent $2$ uses only the measurements from the beacon to improve its localization accuracy (red trajectory in Fig.~\ref{fig:exp_res_second}, with legend `Deterministic'). In case of inter-agent ranging between agent $1$ and $2$, we do not have any a priori knowledge about the UWB ranging mode at each time. Using a power-based UWB modal discriminator, the probability of the measurements between agents $1$ and $2$ being in NLoS is shown in the bottom left plot of Fig.~\ref{fig:exp_res_second}. As seen, by employing the AUCL algorithm, agent $1$ obtains a better localization (the blue trajectory with the legend `AUCL') in comparison to using a threshold to identify exactly the model of the inter-agent measurements and deterministic processing of the identified mode (the red plot with the legend `Deterministic'). The loop-closure errors in Fig.~\ref{fig:exp_res_second} are normalized by the length of the trajectory. The green trajectory with the legend `Naive UWB' shows the localization performance of a filter that ignores the bias in the NLoS measurements. As seen, in case of agent $2$ the performance the Naive UWB processing is even worse than the performance of INS only localization, because all the measurements between agent $2$ and the beacon are in NLoS and ignoring the bias in the measurements has a significant degrading effect. A video presentation of this experiment is available at~\cite{Demo_Youtube1}. 
 
  \begin{figure*}[!t]
     \centering
     \includegraphics[scale=0.4]{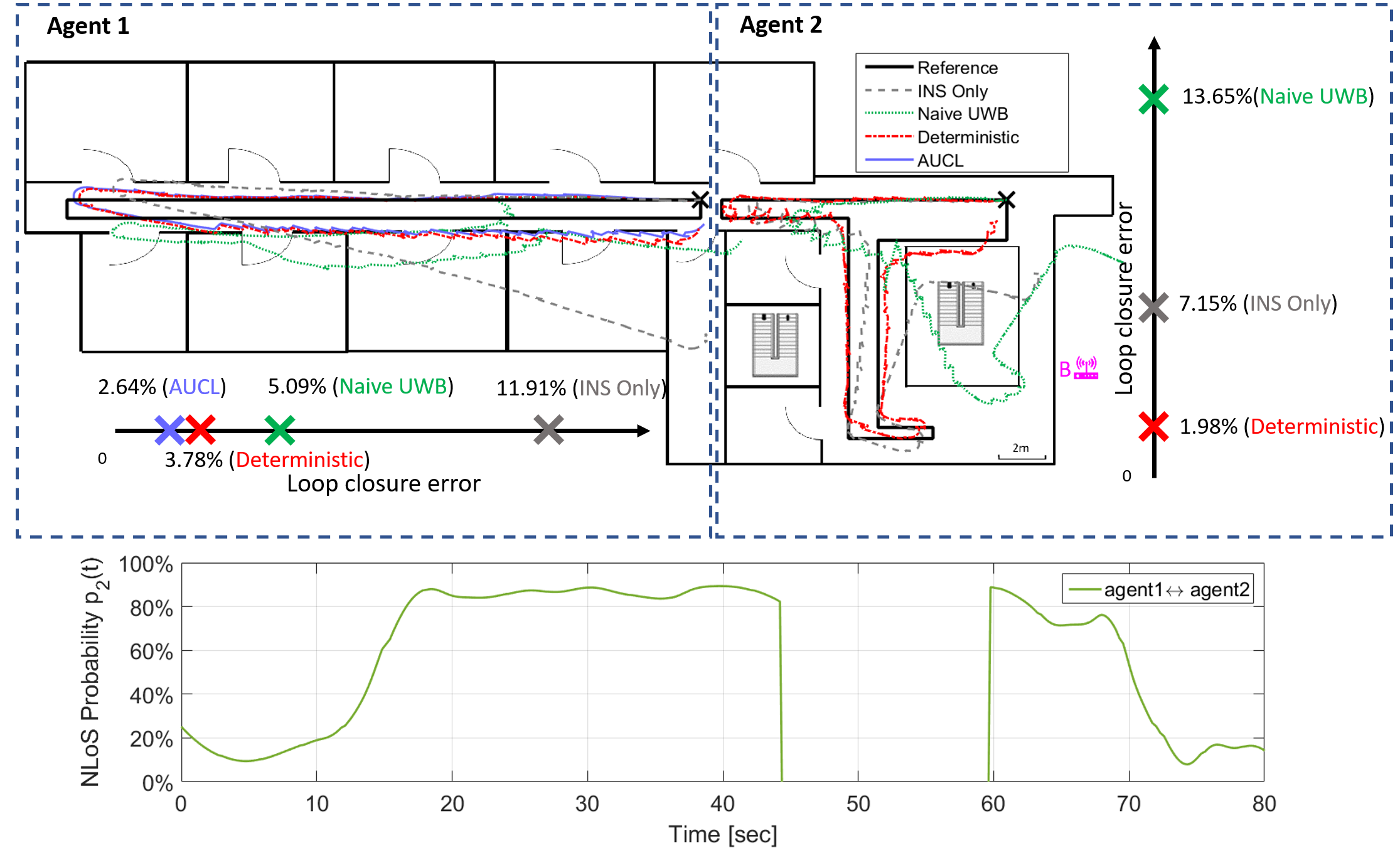}
     \caption{The localization result (trajectory and loop-closure error) of the first experiment in an indoor environment. The plot in the bottom shows the NLoS probability of inter-agent measurements and the gap in the plot is because the two agents were out of the sensing range of each other. }
     \label{fig:exp_res_second}
 \end{figure*}

\textbf{Second experiment}:
In our second experiment, a team of three pedestrian agents walked along a reference trajectory, which was in the outer path around the pool in Fig.~\ref{fig:exp_res} with a length of about $240$ meters.
 Each agent was equipped with the portable localization unit shown in Fig.~\ref{fig:hardware}. Two beacons (B1 and B2) were placed along the path at known locations as shown in Fig.~\ref{fig:hardware}. The inter-agent and the agents to the beacons range measurement mode was not known a priori and depending on where the agents were with respect to each other the measurement could be LoS or NLoS. The power-based modal discriminator was used to identify the probability of each measurement mode. The bottom three plots in Fig.~\ref{fig:exp_res} show the probability that the measurements are in NLoS during the test based on the power-based UWB modal discriminator. In this experiment, agent $2$ and agent $3$ started walking from the same point in the opposite direction. Agent $1$ waited along the path of agent $2$ and started later at the time when agent $2$ got closer. The experiment stopped when agent $2$ and agent $3$ returned to the starting point so we use the loop closure error of these two agents as our performance indicator.  We run four parallel localization filters on each agent. For all three agents, the INS only localization using the foot-mounted IMUs due to the error accumulation results in the trajectories that drift, as shown in the blue solid plot, with legend `INS only' in Fig.~\ref{fig:exp_res}. To bound the error, relative range measurements when agents were in the measurement range of each other were processed to update the local estimates obtained from INS. Due to the existence of obstruction in between agents such as bushes, trees, swimming pool equipment, and people, the measured relative range measurements were under a mix of LoS and NLoS conditions. Ignoring the bias in the measurements resulted in poor localization accuracy and even filter divergence as the black dotted plot with the legend `Naive UWB' in Fig.~\ref{fig:exp_res}. On the other hand, as seen in Fig.~\ref{fig:exp_res}, AUCL algorithm, the red plot with legend `AUCL', by employing bias compensation and also taking into account the probabilistic nature of the power-based UWB modal discriminator delivers the best localization and smallest loop closure error, which is expressed in terms of the percentage of the distance traveled. The trajectories in magenta with legend `Deterministic' shows the performance of the CL AUCL algorithm when we use deterministic identification using a threshold to identify the UWB measurement mode with absolute certainty ($p_2=0$ or $p_2=1$). As we can see ignoring the probabilistic nature of modal discriminator results in poorer localization performance. A video presentation of this experiment is available at~\cite{Demo_Youtube2}.

 \begin{figure*}[!t]
     \centering
     \includegraphics[scale=0.4]{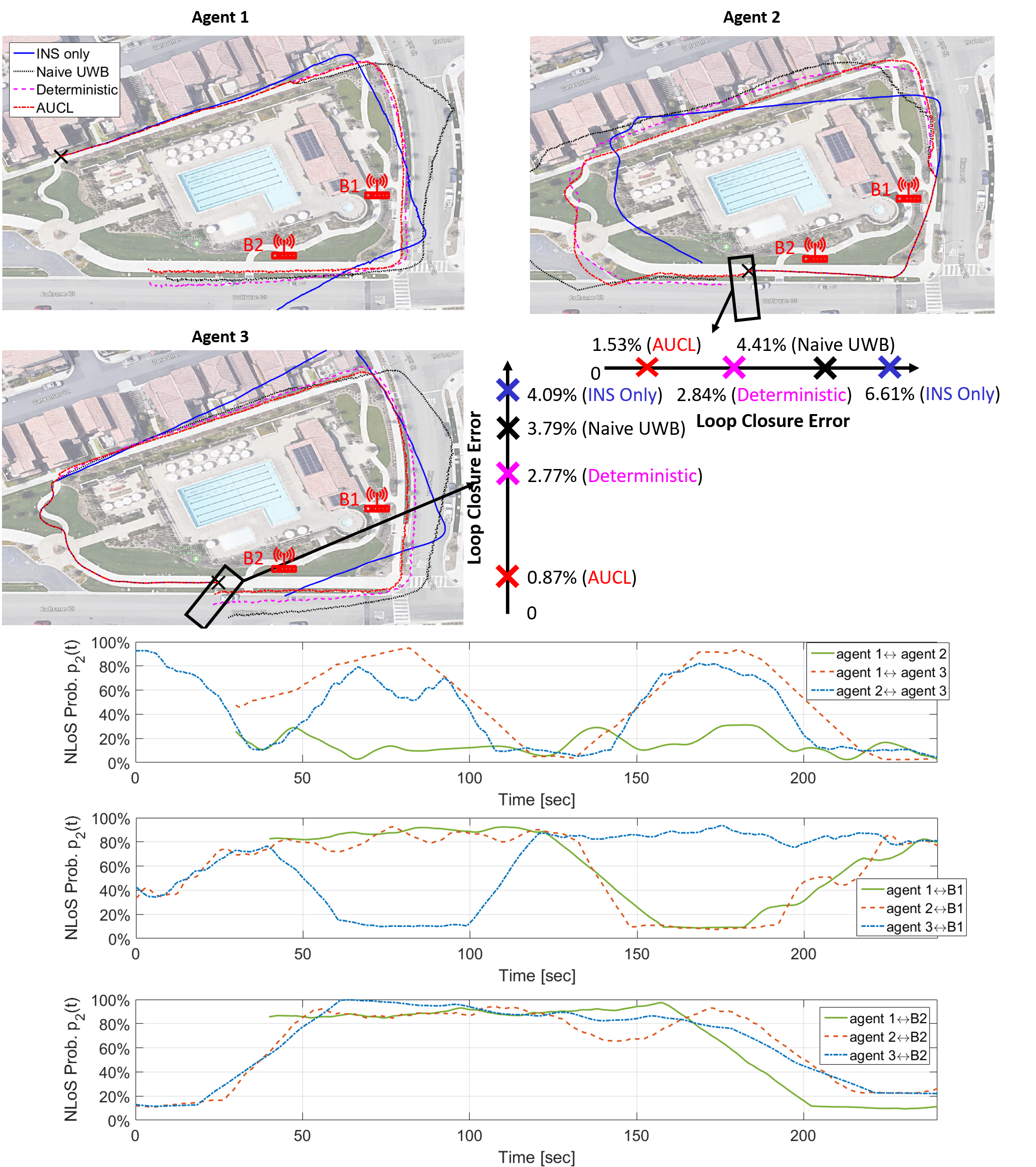}
     \caption{{\small The localization result (trajectories, and the loop-closure errors in terms of percentage of distance traveled) of a field testing with $3$ pedestrian agents and two beacons with known location. The three plots in the bottom show the NLoS probability of, respectively, inter-agent, agents to B1, and agents to B2 measurements.}}
     \label{fig:exp_res}
 \end{figure*}

\section{Conclusions}
We proposed an adaptive UWB based cooperative localization solution for applications where maintaining network-wide connectivity is challenging. 
Our design included a proper bias compensation for NLoS inter-agent UWB range processing, and also took into account the probabilistic multi-modal nature of UWB inter-agent range measurements. We used the IMM method to seamlessly handle the measurement model switching between LoS and NLoS in the UWB range measurements and used the Schmidt Kalman filtering for bias compensation.  We incorporated IMM filtering and bias compensation elements in the framework of a loosely coupled cooperative localization algorithm, that serves as an augmentation atop of a dead-reckoning system such as INS in a loose coupling manner. For each agent, this augmentation becomes active only when the agent takes a relative UWB range measurement with respect to another mobile agent or a beacon. To process the measurement, the agent needs only to communicate with the agent it has taken the measurements from. Our cooperative localization solution also is a practical sub-optimal solution with a low computational complexity, which can be implemented in real-time on a single computing board. We demonstrated the effectiveness of our method via a real-time localization of a pedestrian using an experimental setup.

\bibliographystyle{ieeetr}%
\bibliography{alias,references} 

\end{document}